\documentclass[submission,copyright,creativecommons]{eptcs}
\providecommand{\event}{LTT 2026} 

\usepackage{iftex}

\ifpdf
  \usepackage{underscore}         
  \usepackage[T1]{fontenc}        
\else
  \usepackage{breakurl}           
\fi

\usepackage{amsthm} 
\newtheorem{theorem}{Theorem}[section]
\newtheorem{lemma}[theorem]{Lemma}

\theoremstyle{definition}
\newtheorem{definition}{Definition}[section]

\newtheorem{exmp}[definition]{Example}
\newtheorem{exercise}[definition]{Exercise}

\usepackage{ebproof}
\usepackage{amssymb}
\usepackage{bbold}
\usepackage{mathrsfs}
\usepackage{amsmath}
\newcounter{disgressCounter}
\usepackage[framemethod=tikz]{mdframed}%
\newenvironment{digression}[1][]{\refstepcounter{disgressCounter}
  \begin{mdframed}[
     linewidth=1pt,
     linecolor=blue,
     bottomline=false,
     topline=false,
     rightline=false,
     innerrightmargin=0pt,innertopmargin=0pt,innerbottommargin=0pt,
     innerleftmargin=0.5em,
     skipabove=1\baselineskip
   ]\textbf{\color{blue!75}Digression~\thedisgressCounter. #1} } 
  {\end{mdframed}$\;$}

\newcommand{\eqdef}{=_{\scriptstyle\rm def}}
\newcommand{\powerset}{\mathscr{P}}
\usepackage{todonotes}
\usepackage[normalem]{ulem}

\usepackage{tikz}
\newlength\myheight

\usetikzlibrary{calc}
 
\newcommand{\rtrClo}{{\scriptscriptstyle\Cap}}
\newcommand{\nroClo}{{\scriptscriptstyle\Cup}}
\newcommand{\dedClo}{{\scriptscriptstyle\triangledown}}
\newcommand{\codClo}{{\scriptscriptstyle\vartriangle}}

\title{
Learning Foundations Beneath the Stars
} 
\author{Felice Cardone \qquad\qquad Luca Paolini
\institute{Dipartimento di Informatica\\
University of Turin, Italy}
\email{\{felice.cardone,luca.paolini\}@unito.it}
}
\def\titlerunning{Learning Foundations Beneath the Stars}
\def\authorrunning{Cardone, Paolini}
\begin{document}
\maketitle             
\begin{abstract} 
Foundations of computer science are a key area in theoretical research, one to which
  Stefano  has made significant contributions, particularly from a logical and proof-theoretic perspective. Recently, we have been involved, with him, in teaching an introductory course on this topic, guided by the idea that understanding and writing ordinary, discursive proofs is a valuable skill for future programmers. 
This shared experience has inspired the pedagogical approach at the basis of this paper.

Behind specific foundational topics in computer science lie core techniques that are best taught through examples.
However, standard textbooks often do not place enough emphasis on these ubiquitous techniques and frequently lack examples that are directly relevant to informatics.
We believe that highlighting fundamental techniques, rather than focusing solely on specific foundational topics, would offer significant pedagogical benefits for an introductory course.

In this paper we propose transitive closure of relations as a case study supporting our approach. While all proofs are elementary, we claim that this is a suitable topic for putting to work paradigmatic notions—intrinsically tied to computational thinking—that can serve as structural anchors for a course in the foundations of computer science. In particular, we highlight the techniques employed in proofs, that constitute a comprehensive summary of those that are normally taught in an introductory logic course, and the abstract structures, that allow to connect transitive closure with Kleene star (via quantales) and closure operators (on complete lattices). We then outline a series of further examples that may be used, as in our case study on stars, as a hands-on approach to basic analytic skills to be learned in a course on the foundations of computing.

\end{abstract}

\begin{flushright}
    \footnotesize To Stefano Berardi on the occasion of his birthday.
\end{flushright}

\section{Introduction}
The purpose of this paper is to present one of many possible choices of basic notions that may form the backbone of a course on the Foundations of Computer Science for freshmen (first-year students). Its relatively narrative style is motivated by our focus on the pedagogic relevance of our choice, which leads us to privilege the connections with other topics and techniques rather than on the formal details of the notions that we discuss, which, by the way, are well known.

Our overall proposal is about a change of emphasis over the traditional organization of such a course. While it is rather frequent that courses on foundations of computer science are structured as a series of chapters that deal vertically on fundamental concepts like automata, formal languages, computability and complexity, we believe that the awareness of the methodological coherence of the subject might be improved by developing one or more horizontal projects that may provide cross-sections of the relevant topics within coherent narratives.

These notes develop the essentials of one such project, basically centered on the notion of iteration in two of its key appearances, reflexive transitive closure and Kleene iteration, the stars of our story. In the final section, we suggest other examples that could be used for the same pedagogic purposes as those of the present paper.
\medskip

Before entering our narrative, however, it is worthwhile to spend a few lines to make explicit our attitude towards foundations of computer science as a teaching discipline, especially considering the intended audience of the specific instance of the course that we have in mind.

Throughout our account, we assume that freshmen have already been exposed to the basics of:
 \begin{enumerate}
     \item proof-techniques (direct proof, proof by contradiction and contraposition, induction in its simple form). Here we assume that proof-rules are described as natural deduction rules, in particular as introduction or as elimination rules. We usually employ informally the method of subproofs exploited by Fitch~\cite{fitch1952}, on which there is already a large collection of excellent textbooks addressed to first-year students from many curricula, with many worked-out exercises; 
     \item basic set-theoretic constructions like powerset, Cartesian product, relations, and their properties;
     \item classes of algebraic structures like lattices and monoids, with examples of free structures in these classes, in particular the free monoid over a set. We also assume that the students can recognize easily a collection of elements closed under specific operations as an instance of an algebraic structure: this will be required at least for the powerset, as an instance of the notion of complete lattice or of a complete Boolean algebra, and for the set of words over an alphabet as the free monoid over the alphabet.
 \end{enumerate}
\medskip

According to the mainstream interpretation of ``foundations'', the central topics of the discipline should include at least: formal languages, automata, elements of first-order logic and its models.   From this perspective, foundations may become encyclopedic, 
often encompassing advanced topics that are not suitable for an introductory course. 
We advocate an alternative view, according to which foundations are best understood as a general attitude toward problem solving. This attitude begins with the careful use of natural language as an analytical tool and develops into an appreciation for proofs and their underlying techniques, for the relationships among different notions, and for the advantages gained by exploiting their distinctions in proof construction. Within this framework, our proposal can be seen as the design of a workshop aimed at applying the full potential of this intellectual background to a selection of relevant concrete cases.
 \medskip
 
 The point of departure of our account is the  notion of \textit{transitive closure} of a binary relation over a set, together with several variants of this notion.
Observe that this paper should emphatically \emph{not} be understood as a survey of a technical subject to which many excellent papers already provide 
introductions at many levels of generality (see the last section of this paper for some references). What follows consists of teaching notes for a sequence of lectures that instantiate our general pedagogical approach in a specific case, whose technical details are largely well established.

Our only claim is that the topic chosen is the theme of a coherent project for learning the essential techniques that the course is expected to teach, hoping that this may be useful or even stimulate further discussion on what deserves to be included in a course on foundations like the one we have considered. Often we add digressions to motivate and explain the notions and constructions introduced by framing them within the didactic context that is the leading theme of this paper, and also to sketch their historical background, which in many cases is closely intertwined with the development of the foundations of mathematics.

Our approach can be characterized as \emph{horizontal}, meaning an approach in which themes are chosen carefully so that the proofs do provide the main methodological examples in a significant motivating context. This should be contrasted with a more traditional \emph{vertical} approach where each topic drawn from a fixed repertoire is explored to the desired depth. Of course there are many such examples, and even in this case there exist interesting paths that we have not taken, or are merely suggested and left to the taste of the instructor. An example of these is the brief mention of closure operators and the associated closure systems, that might be used as an explanation of what corresponds to closure in `transitive closure'.

 \section{Transitive closure}
  
 Let $R\subseteq X\times X$ be a binary relation over a set $X$. 
 It may happen that, while $x R y$ and $yRz$ for some $x,y,z\in X$,
 it is not true that $xRz$; namely, the relation $R$ may be not transitive.
However, there are cases where one would like to extend $R$ 
to include all pairs $(x , z)$  where there is $y\in X$ so that $xRy$ and $yRz$.

For example, it might be the case  that  $xRy$ expresses the fact that, starting from a memory state $x$, the execution of one instruction of a program leads to state $y$. In this case, in order to describe the execution of an entire program we are naturally interested in studying sequences
$$x=x_1 R x_2 R \ldots R x_n=y$$
consisting of a finite number (possibly 0) of intermediate steps leading from the input $x$ to the output~$y$.
This new relation, denoted by $R^*$, is generated from $R$ in the sense that it is the smallest reflexive and transitive extension of $R$: the \emph{reflexive transitive closure}  of $R$. 

\begin{digression}\label{dig:closure}
The history of the general notion of closure, of which reflexive transitive closure is an instance, goes back to the origin of abstract algebra and the early work in set theory by Cantor, Dedekind and slightly later by E.\ H.\ Moore. The complete history is reconstructed in Section 1 of the excellent survey \cite{erne:2009}; it is also related to the history of Galois connections and the categorical notion of \emph{adjunction} (and therefore also to the notion of \emph{monad}, that has become important in the study of type systems for programming languages), on which there is another nice survey by the same author \cite{erne:2004}.  
The informal idea at the basis of closure is that of a set $A \subseteq X$ that has to be completed, by adding points, with respect to a property defined over subsets of $X$. It is then natural to regard the completion process as an operator over $X$, namely a function $c:\powerset X \to \powerset X$ defined on subsets of $X$, such that 
\begin{itemize}
\item $c$ is \emph{inflationary}, namely $X \subseteq cX$;
\item $c$ is \emph{idempotent}, $c \circ c = c$;
\item $c$ is \emph{monotonic}, i.\ e.\ $cX \subseteq cY$ whenever $X \subseteq Y$.
\end{itemize}
The first property represents the (possible) addition of points, whereas idempotency corresponds to an interesting linguistic phenomenon that closing shares with other verbs, like filling. A closed set has the form $cX$, and closing a closed set leaves the set unchanged: closed sets are the fixed points of $c$. One may generalize the order-theoretic properties of a closure operator to functions over partially ordered sets, but in this paper we will only need the operator formulation, which is enough to compare reflexive transitive closure to the other star, Kleene's closure. 

Closure operators $c:\powerset X \to \powerset X$ correspond bijectively to \emph{Moore families}, defined as collections $\mathcal{C}$ of subsets of $X$ such that the intersection of any collection of elements of $\mathcal{C}$ is still an element of $\mathcal{C}$. For example, the collection $\tau$ of transitive relations over a set forms a Moore family in this sense (showing that the intersection of a set of transitive relations is a transitive relation is an easy exercise on direct proofs). Transitive closure can therefore be seen as a closure operator associating with every binary relation $R$ over a set the intersection of all elements of $\tau$ that contain $R$. Since the original reference \cite{moore1910} this correspondence has now become folklore.
\end{digression}

\begin{definition}[Reflexive Transitive Closure]
Let $\mathbb{1}$ denote the identity relation on $X$ and,
let $\circ$ denote the composition of relations\footnote{If $R,S$ are relations then $R\circ S \mathrel{:=} 
    \{(x,z)\mid xRy \text{ and } ySz\text{, for some } y\}$, where $\mathrel{:=}$ denotes a definition.}.
 If $R$ is a binary relation on a set $X$ then, its transitive closure 
 $R^*\subseteq X\times X$ is the least relation $S\subseteq X\times X$ such that:
 \begin{itemize}
    \item $R\subseteq S$, \item $\mathbb{1}\subseteq S$, and \item $S\circ S \subseteq S$.
\end{itemize}
\end{definition}

Clearly,  $R^*\subseteq X\times X$ has the following properties:
 $R\subseteq R^*$;
     $\mathbb{1}\subseteq R^*$;
     $R^* \circ R^* \subseteq R^*$.

\begin{exercise}
\begin{itemize}
    \item Show that a binary relation $S$ is transitive exactly when 
    $S\circ S \subseteq S$. Of course use the ordinary definition of transitivity, whereby $S$ is transitive iff $x S y$ whenever $x S z$ and $z S y$, for all $x,y,z$.
    \item Recall that a relation on a set $X$ is said to be dense if, for all $x,y \in X$ for which $x R y$,
     there is a $z\in X$ such that $x R z$ and $z R y$.
    The inclusion\footnote{
    Note that this part of the exercise involves an operative understanding of negation of quantified propositions.} $S \subseteq S\circ S $
    is a characteristic of dense relations: show that there are transitive relations that are not dense. 
\end{itemize}
\end{exercise}

The problem now is how to build $R^*$. There are at least four equivalent ways,
each of them representing a different perspective on the construction of the transitive closure. Our claim is that the proofs of their equivalence involve most of the proof techniques that students should be able to use fluently by the end of the course. In the next section, we will complement these techniques with others relative to complete lattices, and this will justify our choice of (reflexive) transitive closure as a significant case study.
The relation built according to each of these methods will be named $R^\rtrClo, R^\nroClo,R^\dedClo,R^{\codClo}$.
Each of these has a history and a personality of its own.

\begin{definition}\label{def:closures}
\label{def:tc}
Let $R$ be a binary relation on a set $X$.
    \begin{enumerate}
    \item We define $\displaystyle R^\rtrClo := \bigcap_{\mathbb{1}\cup R\cup (S\circ S) \subseteq S} S$.
      \item  We define $\displaystyle R^{\nroClo} := 
      \bigcup_{n\in\omega} R^n$ 
      \quad where $R^0:= \mathbb{1}$ and $R^{n+1}:= R\circ R^n$.
      \label{def:tc-power}
     \item The following rules, where $x,y,z\in X$, define a formal system whose
     theorems identify all and only the pairs belonging to $R^{\dedClo }$:
$$\begin{prooftree}[center=false]
   \hypo{  }
\infer1[\scriptsize{(id)}]{xR^{\dedClo } x   }
\end{prooftree}
\qquad
\begin{prooftree}[center=false]
\hypo{ xRy}
\infer1[\scriptsize{(in)}]{x R^{\dedClo } y }
\end{prooftree}
\qquad
\begin{prooftree}[center=false]
\hypo{xRy }\hypo{yR^{\dedClo } z }
\infer2[\scriptsize{(tx)}]{xR^{\dedClo } z }
\end{prooftree}$$
(The second rule is redundant, see Exercise~\ref{ex:rules} below.)
\item Let $R[A] := \{ b \mid \exists a\in A, aRb\}$, then  we define 
$$R^{\codClo}:= 
\{ (x,y) \mid \forall A\subseteq X \pmb{(} \mbox{ if }( R[A] \subseteq A \mbox{ and } x\in A )\textrm{ then }  y\in A \pmb{)} \}.$$
    \end{enumerate}
\end{definition}

\begin{digression}\label{dig:ancestral}
The definition of $R^\codClo$ is basically the same as the ancestral relation defined originally by Frege 1879, 
and used by Russell and Whitehead in \textit{Principia Mathematica} \cite{russell1910pm}, and then by Quine, in his 1940 textbook \cite{quine1940ml} and later in his treatise on set theory \cite{Quine1963}. In the original treatments of the ancestral the converse $R^{-1}$ is used instead of $R$: transitive closure is, in this sense, dual to the ancestral, the former looking forward to descendants, the latter looking backwards to ancestors. The clause $R[A] \subseteq A$ expresses the fact that $A$ is a \emph{hereditary} subset of $X$: this means that, for any $a \in A$, its $R$-successors are all in $A$, the $R$-successors of the $R$-successors of $a$ are in $A$, and so on, so that, intuitively, the $R$-descendants of any element of $A$ are also in $A$. The whole definition then means that $x$ is related to $y$ in the transitive closure of $R$ whenever $y$ belongs to every hereditary subset of $X$ that contains $x$. 
\end{digression}

    \begin{exercise}\label{ex:rules}
        Show that the second rule defining $R^\dedClo$ is redundant.
    \end{exercise}

   \begin{exmp} 
Let $X=\{ a,b,c \}$ and $R=\{(a,b),(b,c)\}$ a binary relation on $X$.
\begin{itemize}
\item The subsets of $X$ that contain $a$ are four: $A_0=\{a\}$, $A_1=\{a, b\}$, $A_2=\{a, c\}$ and $A_3=\{a,b,c\}$.  We have $R[A_0]=\{b\}$, $R[A_1]=\{b,c\}$, $R[A_2]=\{b\}$
and $R[A_3]=\{b,c\}$ (see Definition~\ref{def:closures}.4).
Since the only $A_i$ such that $R[A_i]\subseteq A_i$ 
is $A_3$, it is easy to see that $\{(a,a),(a,b),(a,c)\} \subseteq R^{\codClo}$.
\item The subsets of $X$ that contain $b$ are four: $B_0=\{b\}$, $B_1=\{a, b\}$, $B_2=\{b, c\}$ and $B_3=\{a,b,c\}$.  We have $R[B_0]=\{c\}$, $R[B_1]=\{b,c\}$, $R[B_2]=\{c\}$
and $R[B_3]=\{b,c\}$.
Since only $B_2$ and $B_3$ are such that $R[B_i]\subseteq B_i$,
it is easy to see that $\{(b,b),(b,c)\} \subseteq R^{\codClo}$.
\item We leave the same reasoning about $c$ as an exercise.
\end{itemize}
    \end{exmp}

\noindent We can now state the main theorem.

 \begin{theorem}\label{th:main}
     $R^*= R^\rtrClo = R^\nroClo = R^\dedClo  = R^\codClo$.
 \end{theorem}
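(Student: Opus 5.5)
The plan is to prove the equalities as a cycle of inclusions, using $R^*$ as the pivot. First, $R^*=R^\rtrClo$ is essentially a matter of unfolding definitions. The family of relations $S$ with $\mathbb{1}\cup R\cup(S\circ S)\subseteq S$ is closed under arbitrary intersections: if every $S_i$ contains $\mathbb{1}\cup R$ and $x S_i y$, $y S_i z$ for all $i$, then $x S_i z$ for all $i$ (it is a Moore family, as in Digression~\ref{dig:closure}). Hence $R^\rtrClo$ itself belongs to the family, and it is contained in every member, so it is the least such relation, which is exactly $R^*$. It then suffices to prove
\[
R^\nroClo\subseteq R^\dedClo\subseteq R^\codClo\subseteq R^\nroClo
\quad\text{and}\quad R^*=R^\nroClo .
\]

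For $R^\nroClo\subseteq R^\dedClo$, I will show by induction on $n$ that $R^n\subseteq R^\dedClo$. The case $n=0$ is the rule (id). In the step, $x R^{n+1} z$ means there is $y$ with $xRy$ and $y R^n z$; the induction hypothesis gives $y R^\dedClo z$, and rule (tx) yields $x R^\dedClo z$.

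For $R^\dedClo\subseteq R^\codClo$, I will use induction on derivations (rule induction). Fix $A$ with $R[A]\subseteq A$. For (id), $x\in A$ gives $x\in A$ trivially. For (in), $x\in A$ and $xRy$ give $y\in R[A]\subseteq A$. For (tx), $x\in A$ and $xRy$ give $y\in A$, and the induction hypothesis applied to $y R^\dedClo z$ with the same $A$ gives $z\in A$.

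For $R^\codClo\subseteq R^\nroClo$, given $(x,y)\in R^\codClo$, I will instantiate the universal quantifier with the specific set $A=\{z\mid x R^\nroClo z\}$. Then $x\in A$ via $R^0=\mathbb{1}$. To see that $A$ is hereditary, note that if $x R^n z$ and $zRw$ then $x R^{n+1} w$. This needs the lemma $R^n\circ R = R\circ R^n$, or more generally $R^m\circ R^n=R^{m+n}$, which I will prove by induction on $m$ using associativity of $\circ$. Choosing this witness set is the key creative step, the elimination of a second-order universal.

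Finally, I will prove $R^*=R^\nroClo$. The relation $R^\nroClo$ contains $\mathbb{1}=R^0$ and $R=R^1$. It is transitive by the same lemma $R^m\circ R^n=R^{m+n}$, so $R^*\subseteq R^\nroClo$ by minimality of $R^*$. Conversely, an induction on $n$ shows $R^n\subseteq R^*$: the case $n=0$ holds since $\mathbb{1}\subseteq R^*$, and $R^{n+1}=R\circ R^n\subseteq R^*\circ R^*\subseteq R^*$.

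The main obstacle I expect is the power lemma $R^m\circ R^n=R^{m+n}$, since the definition $R^{n+1}=R\circ R^n$ composes on the left while hereditariness and transitivity need composition on the right. I would isolate it first, proving associativity of relational composition and then inducting on $m$.
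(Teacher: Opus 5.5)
Your proof is correct, but it is organized more economically than the paper's. The paper proves a deliberately redundant set of inclusions: $R^\rtrClo\subseteq R^\nroClo\subseteq R^\dedClo$, $R^\dedClo\subseteq R^\rtrClo$, $R^\dedClo\subseteq R^\nroClo$ and $R^\codClo=R^\nroClo$. It also adds a direct proof of $R^\rtrClo\subseteq R^\dedClo$, by showing that the transitivity rule (trx) is admissible through a cut-elimination induction on the height of the left premise; it includes this because it is instructive. You instead close the single cycle $R^\nroClo\subseteq R^\dedClo\subseteq R^\codClo\subseteq R^\nroClo$. In it, you replace the paper's induction on $n$ for $R^\nroClo\subseteq R^\codClo$ with a rule induction for $R^\dedClo\subseteq R^\codClo$. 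You then attach $R^*$ and $R^\rtrClo$ only through $R^\nroClo$, proving $R^\nroClo\subseteq R^*$ by induction on $n$ rather than via $R^\dedClo\subseteq R^\rtrClo$. Consequently you never need transitivity of $R^\dedClo$. Its role is played by your power lemma $R^m\circ R^n=R^{m+n}$, whose induction on $m$ has the same combinatorial content as the paper's admissibility argument: the (tx) case, which pushes the cut upward, is exactly $(R\circ R^m)\circ R^n=R\circ(R^m\circ R^n)$, and the (id) case is $R^0\circ R^n=R^n$. Your route also makes explicit two facts the paper uses tacitly. First, the family defining $R^\rtrClo$ is closed under intersections, so the least relation $R^*$ exists and equals $R^\rtrClo$; the paper only alludes to this in Digression~\ref{dig:closure}. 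Second, since $R^{n+1}=R\circ R^n$ composes on the left, two steps in the paper depend on that power lemma: the step ``$xR^{n+m}z$'' in its first lemma, and the claim in its last lemma that $R[\texttt{desc}(x)]\subseteq\texttt{desc}(x)$ holds ``immediately''. The paper's route buys a broader display of techniques, such as admissibility and induction on derivations in both directions, which is its pedagogical aim. Yours buys a minimal set of inclusions with every algebraic fact accounted for.
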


 The proof of the Theorem is subdivided into several lemmas.

\begin{lemma}
    $R^\rtrClo\subseteq R^\nroClo$
\end{lemma}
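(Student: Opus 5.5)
Since $R^\rtrClo$ is the intersection of all relations $S$ with $\mathbb{1}\cup R\cup (S\circ S)\subseteq S$, it suffices to show that $R^\nroClo$ is one of these $S$. The intersection is then contained in it by elimination of the big intersection: every element of $R^\rtrClo$ belongs to each such $S$, in particular to $R^\nroClo$. So the plan is to verify the three inclusions $\mathbb{1}\subseteq R^\nroClo$, $R\subseteq R^\nroClo$ and $R^\nroClo\circ R^\nroClo\subseteq R^\nroClo$.

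The first two are immediate. We have $\mathbb{1}=R^0\subseteq R^\nroClo$, and $R^1=R\circ R^0=R\circ\mathbb{1}=R$, so $R\subseteq R^\nroClo$. The only thing to check here is that $\mathbb{1}$ is a right unit for $\circ$, which is a one-line direct proof from the footnoted definition of composition.

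The third inclusion is the main step. Given $(x,z)\in R^\nroClo\circ R^\nroClo$, there is $y$ with $x R^\nroClo y$ and $y R^\nroClo z$. By the definition of union, $xR^m y$ and $yR^n z$ for some $m,n\in\omega$, so $(x,z)\in R^m\circ R^n$. It then suffices to prove the auxiliary claim
\[ R^m\circ R^n\subseteq R^{m+n} \quad\text{for all } m,n\in\omega, \]
because then $(x,z)\in R^{m+n}\subseteq R^\nroClo$.

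I would prove the auxiliary claim by induction on $m$, keeping $n$ arbitrary.
\begin{itemize}
\item For $m=0$: $R^0\circ R^n=\mathbb{1}\circ R^n=R^n$, since $\mathbb{1}$ is also a left unit.
\item For $m+1$: $R^{m+1}\circ R^n=(R\circ R^m)\circ R^n=R\circ(R^m\circ R^n)$ by associativity of composition. By the induction hypothesis and monotonicity of $R\circ -$, this is contained in $R\circ R^{m+n}=R^{m+1+n}$.
\end{itemize}
The side facts needed are associativity and monotonicity of $\circ$, each an easy direct proof by unfolding the existential quantifiers.

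The expected obstacle is choosing the right induction variable. The definition $R^{n+1}=R\circ R^n$ adds $R$ on the left, so the induction must be on the left exponent $m$; inducting on $n$ would require first proving $R^{n+1}=R^n\circ R$. With the claim in hand, the three closure conditions hold, so $R^\nroClo$ is among the relations being intersected, and hence $R^\rtrClo\subseteq R^\nroClo$.
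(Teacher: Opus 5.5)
Your proposal is correct and follows the same route as the paper: you show that $R^{\nroClo}$ satisfies $\mathbb{1}\cup R\cup(R^{\nroClo}\circ R^{\nroClo})\subseteq R^{\nroClo}$, so it is one of the relations being intersected. The only difference is that the paper simply asserts $xR^{n+m}z$ from $xR^ny$ and $yR^mz$, whereas you justify that step with an explicit induction on the left exponent proving $R^m\circ R^n\subseteq R^{m+n}$; this fills in detail the paper leaves out and does not change the argument.
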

\begin{proof}
    It is enough to show that $\mathbb{1}\cup R\cup (R^\nroClo\circ R^\nroClo) \subseteq R^\nroClo$.
    Clearly $\mathbb{1} \subseteq R^\nroClo$ and $R \subseteq R^\nroClo$ because 
      both $R^0=\mathbb{1}$ and $R^1=R\circ R^0 = R\circ \mathbb{1} = R$ by Definition~\ref{def:closures}.2.
    Moreover, we also note that:
\begin{itemize}
    \item  $x R^\nroClo y$    implies that there is $n\in\omega$ such that $xR^n y$,
    \item  $y R^\nroClo z$    implies that there is $m\in\omega$ such that $yR^m z$,
\end{itemize}
   so $x R^\nroClo z$ because $xR^{n+m} z$.
\end{proof}

\begin{digression}\label{dig:impredicative}
This first part of the proof follows from the fact that transitive closure 
of $R$ is the \emph{smallest} extension of $R$ which is transitive. 
Here `smallest' is relative to set-theoretic inclusion, whence the (impredicative) 
meet over \emph{all} transitive extensions of $R$ .
Historically, this is important for the criticisms of Poincaré
against the impredicative definitions that are the kernel of
his negative attitude toward the Frege-Whitehead-Russell definition of the ancestral.
Technically, the definition of $R^\rtrClo$ invites a reflection on the idea of closure under a property
and the relation between closure systems and closure operators,
as described, for example, in \cite{davey1990introduction}. We shall come back to this topic in section \ref{further}.
\end{digression}

\begin{lemma}
    $R^\nroClo \subseteq R^\dedClo $
\end{lemma}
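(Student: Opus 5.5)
Since $R^\nroClo=\bigcup_{n\in\omega}R^n$, it suffices to show that $R^n\subseteq R^\dedClo$ for every $n\in\omega$. I will argue by induction on $n$, stating the inductive property as: for all $x,z\in X$, if $xR^nz$ then $xR^\dedClo z$ is derivable in the formal system of Definition~\ref{def:closures}.3. Unfolding the union is the elimination step: from $xR^\nroClo z$ we obtain some $n$ with $xR^nz$, and the claim for that $n$ yields $xR^\dedClo z$.

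\emph{Base case $n=0$.} Here $R^0=\mathbb{1}$, so $xR^0z$ means $x=z$. The rule (id) gives $xR^\dedClo x$ directly.

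\emph{Inductive step.} Assume $R^n\subseteq R^\dedClo$ and let $xR^{n+1}z$. Since $R^{n+1}=R\circ R^n$, the footnote defining composition gives some $y$ with $xRy$ and $yR^nz$. The induction hypothesis turns $yR^nz$ into $yR^\dedClo z$. Rule (tx), applied to the premises $xRy$ and $yR^\dedClo z$, then yields $xR^\dedClo z$. The argument works because the definition $R^{n+1}=R\circ R^n$ adds the new $R$-step on the left, which is exactly the shape of rule (tx). Rule (in) is not needed, in line with Exercise~\ref{ex:rules}.

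The main point requiring care is the bookkeeping of quantifiers. The induction must be on $n$ with $x$ and $z$ universally quantified inside the inductive property, so that the hypothesis can be applied to the intermediate point $y$ rather than to $x$. One also has to use the existential witness $y$ correctly when eliminating the composition. Beyond that, each step is a single rule application.
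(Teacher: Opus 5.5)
Your proof is correct and follows the paper's argument exactly: induction on $n$ to show $R^n\subseteq R^\dedClo$, with rule (id) covering $R^0=\mathbb{1}$ and rule (tx) covering $R^{n+1}=R\circ R^n$. Your explicit handling of the union, of the composition witness $y$, and of the universally quantified inductive property only spells out what the paper leaves implicit.
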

\begin{proof}
By induction we prove that, for all $n\in\omega$ we have $R^n\subseteq R^\dedClo $. Clearly $R^0=\mathbb{1}\subseteq R^\dedClo $ by the first rule.
$R^{n+1}=R\circ R^n$,  by induction hypothesis $R^n\subseteq R^\dedClo $, so $R^{n+1}\subseteq R^\dedClo $ by the third rule.
\end{proof}

 Strictly speaking, the following lemma could be deduced from the two preceding lemmas, but it is instructive to present a direct proof.

\begin{lemma}
    $R^\rtrClo \subseteq R^\dedClo $
\end{lemma}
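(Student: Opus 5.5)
Since $R^\rtrClo$ is defined as the intersection of all relations $S$ with $\mathbb{1}\cup R\cup (S\circ S)\subseteq S$, the direct route is to show that $R^\dedClo$ is one of these $S$. Then $R^\rtrClo\subseteq R^\dedClo$ follows at once, because an intersection is contained in each of the sets being intersected. So the plan reduces to checking three inclusions: $\mathbb{1}\subseteq R^\dedClo$, $R\subseteq R^\dedClo$, and $R^\dedClo\circ R^\dedClo\subseteq R^\dedClo$.

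The first two are immediate from the rules. For every $x$, rule (id) gives $xR^\dedClo x$, so $\mathbb{1}\subseteq R^\dedClo$. If $xRy$, rule (in) gives $xR^\dedClo y$, so $R\subseteq R^\dedClo$. Alternatively, one can derive it from (tx) applied to $xRy$ and $yR^\dedClo y$, which is the content of Exercise~\ref{ex:rules}.

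The real work, and the step I expect to be the main obstacle, is transitivity. Rule (tx) only lets us prepend a single $R$-step, so closure under composition of two arbitrary derivable pairs is not given by any rule and has to be proved. I would prove the following by induction on the derivation of $xR^\dedClo y$: for all $z$, if $yR^\dedClo z$ then $xR^\dedClo z$. Equivalently, one could induct on the height of the derivation. There are three cases, according to the last rule used.
\begin{itemize}
\item If the last rule is (id), then $x=y$, and the conclusion $xR^\dedClo z$ is exactly the hypothesis $yR^\dedClo z$.
\item If the last rule is (in), then $xRy$, and one application of (tx) to $xRy$ and $yR^\dedClo z$ gives $xR^\dedClo z$.
\item If the last rule is (tx), then $xRw$ and $wR^\dedClo y$ for some $w$, where the latter has a shorter derivation. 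The induction hypothesis applied to $wR^\dedClo y$ and $yR^\dedClo z$ gives $wR^\dedClo z$, and then (tx) with $xRw$ yields $xR^\dedClo z$.
\end{itemize}
The delicate point is to state the induction hypothesis with $z$ universally quantified and to induct on the left premise rather than the right. This is what makes the (tx) case go through.

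Putting these together, $\mathbb{1}\cup R\cup(R^\dedClo\circ R^\dedClo)\subseteq R^\dedClo$. Hence $R^\dedClo$ occurs among the relations intersected in the definition of $R^\rtrClo$, and therefore $R^\rtrClo\subseteq R^\dedClo$.
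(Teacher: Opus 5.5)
Your proof is correct and essentially the same as the paper's: you show that $R^\dedClo$ is one of the relations in the intersection, handle $\mathbb{1}$ and $R$ with (id) and (in), and get transitivity by induction on the derivation of the left premise, using the same three cases. The only difference is framing: the paper states the transitivity step as admissibility of a rule (trx) and eliminates it from derivations, while you prove the closure property directly by the same induction.
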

\begin{proof}
From rules (id) and (in) easily follow  $\mathbb{1}  \subseteq S$ and $  R \subseteq S$. It remains to conclude $ S\circ S \subseteq S$.
This follows, because we can prove that the rule
$$\begin{prooftree}
      \hypo{xR^\dedClo  y}
  \hypo{yR^\dedClo z}
  \infer2[\scriptsize(trx)]{xR^\dedClo  z}
\end{prooftree}$$
is admissible, namely it can be added to the formal system without increasing the set of theorems (see Digression \ref{dig:admissibility}).
 In particular, we show that for each derivation of the form
 $$\mathfrak{D}_{trx} = {\begin{prooftree} 
 \hypo{}\ellipsis{$\mathfrak{D}_L$}{\quad xR^\dedClo  y} 
 \hypo{}\ellipsis{$\mathfrak{D}_R$}{\quad yR^\dedClo z}
 \infer2[\scriptsize(trx)]{xR^\dedClo  z}
\end{prooftree}}$$
with 
no occurrence of (trx) in $\mathfrak{D}_L, \mathfrak{D}_R$
we can build a derivation 
 with the same conclusion that does not use (trx), by induction on the height of the subderivation $\mathfrak{D}_L$.\\
There are three cases depending on the last rule applied in $\mathfrak{D}_L$.
\begin{itemize}
\item If the last rule in $\mathfrak{D}_L$ is (id) then the desired derivation is  $\mathfrak{D}_R$, which does not contain applications of (trx) by assumption.\\[-5mm]
\item If the last rule  in $\mathfrak{D}_L$  is (in)  then  
the derivation has the form 
$
\begin{prooftree} 
  \hypo{xRy}
  \infer1[\scriptsize(in)]{xR^\dedClo  y} 
  \hypo{}\ellipsis{$\mathfrak{D}^0_R$}{\quad yR^\dedClo z}
 \infer2[\scriptsize(trx)]{xR^\dedClo  z}
 \end{prooftree}$
 . We can conclude by replacing the proof tree with 
 $\begin{prooftree}[center=false]
  \hypo{xRy}
  \hypo{}\ellipsis{$\mathfrak{D}^0_R$}{\quad yR^\dedClo z}
 \infer2[\scriptsize(tx)]{xR^\dedClo  z}
 \end{prooftree}$.
\item If the last rule   in $\mathfrak{D}_L$ is (tx) then  
the derivation has the form 
  \begin{equation}
      \label{cut}
  \begin{prooftree}
 \hypo{\quad xR u}
 \hypo{}\ellipsis{}{\quad uR^\dedClo y}
 \infer2[\scriptsize(tx)]{xR^\dedClo  y}
 \hypo{}\ellipsis{$\mathfrak{D}^0_R$}{\quad yR^\dedClo  z}
 \infer2[\scriptsize(trx)]{xR^\dedClo  z}.
\end{prooftree}
 \end{equation}
Thus, we can build the derivation\\[-5mm]
$$ \begin{prooftree}
  \hypo{\quad xRu}
      \hypo{}\ellipsis{}{\quad uR^\dedClo  y}
      \hypo{}\ellipsis{$\mathfrak{D}^0_R$}{\quad yR^\dedClo z}
  \infer2[\scriptsize(trx)]{uR^\dedClo  z}
\infer2[\scriptsize(tx)]{xR^\dedClo  z}
\end{prooftree}$$
But the induction hypothesis applies to the derivation of $uR^\dedClo  y$, therefore there exists a proof of
$$ \begin{prooftree}
      \hypo{}\ellipsis{}{\quad uR^\dedClo  y}
      \hypo{}\ellipsis{$\mathfrak{D}^0_R$}{\quad yR^\dedClo z}
  \infer2[\scriptsize(trx)]{uR^\dedClo  z}
\end{prooftree}$$
that does not use (trx), showing that the application of (trx) in \ref{cut} can be eliminated. 
\end{itemize} 
\end{proof}

\begin{digression}\label{dig:admissibility}
Here, we could open a digression on formal systems and the notion of admissible rule,
introduced by Lorenzen in his 1955 treatise on the operative foundations of logic  and mathematics \cite{lorenzen1955springer}. Admissibility is a finer property than derivability, and establishing that a rule in a formal system is admissible normally involves an analysis, within a proof by induction on the height of derivations, of the possible ways in which a conclusion may have been obtained. This makes proofs of admissibility an ideal candidate to get students to become familiar with this flavor of proof by induction.
\end{digression}

\begin{lemma}
    $R^\dedClo \subseteq R^\rtrClo$
\end{lemma}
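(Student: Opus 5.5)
To prove $R^\dedClo \subseteq R^\rtrClo$ we use induction on the height of derivations in the formal system of Definition~\ref{def:closures}.3. The claim is that whenever $x R^\dedClo y$ is derivable, the pair $(x,y)$ belongs to \emph{every} relation $S$ satisfying $\mathbb{1}\cup R\cup (S\circ S)\subseteq S$. This gives $(x,y)\in R^\rtrClo$, because $R^\rtrClo$ is by definition the intersection of all such $S$. So we fix an arbitrary $S$ with $\mathbb{1}\subseteq S$, $R\subseteq S$ and $S\circ S\subseteq S$, and show by induction on derivations that every derivable $xR^\dedClo y$ yields $xSy$.

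We distinguish cases on the last rule applied. If it is (id), the conclusion is $xR^\dedClo x$, and $xSx$ holds because $\mathbb{1}\subseteq S$. If it is (in), the premise is $xRy$, and $xSy$ follows from $R\subseteq S$. If it is (tx), the premises are $xRy$ and a derivation of $yR^\dedClo z$ of smaller height. From the first premise and $R\subseteq S$ we get $xSy$. The induction hypothesis applied to the second premise gives $ySz$. Hence $(x,z)\in S\circ S\subseteq S$, so $xSz$.

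Since $S$ was arbitrary, $(x,y)$ lies in the intersection, that is, in $R^\rtrClo$. None of the steps is a genuine obstacle. The one point to handle carefully is the order of quantifiers: $S$ must be fixed \emph{before} the induction starts. Alternatively, the induction statement must be ``for all such $S$'', with the induction running over derivations rather than over $S$. Said differently, we show that $R^\rtrClo$ itself is closed under the three rules, using the fact that $R^\rtrClo$ satisfies the three closure conditions; this holds because those conditions are preserved under intersections, as in the Moore-family remark of Digression~\ref{dig:closure}. Either formulation works, and we would present the first, since it avoids having to verify separately that $R^\rtrClo$ is transitive.
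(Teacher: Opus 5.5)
Your proof is correct and follows the same route as the paper. Both argue by induction on the height of the derivation of $xR^\dedClo y$ that $xSy$ holds for every $S$ with $\mathbb{1}\cup R\cup(S\circ S)\subseteq S$, and in the (tx) case both combine $R\subseteq S$, the induction hypothesis and $S\circ S\subseteq S$; you also write out the (id) and (in) cases, which the paper leaves implicit.
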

\begin{proof}
We have to show that, if $xR^\dedClo y$ then $xSy$ for every $S$ such that $\mathbb{1}\cup R \cup (S\circ S)\subseteq S$.
This is proved by induction on the height of the derivation $xR^\dedClo y$ (see Digression \ref{dig:proofTech}).
It is enough to consider the induction step when the last rule in the derivation of $xR^\dedClo y$ is (tx):
$$\begin{prooftree}
 \hypo{xRu}\hypo{uR^\dedClo y}
  \infer2{xR^\dedClo y}
\end{prooftree}$$
The derivation of $uR^\dedClo y$ is shorter than that currently being examined, 
so the induction assumption applies. Thus, $uSy$ for every $S$ satisfying
$$\mathbb{1}\cup R\cup (S\circ S) \subseteq S\;.$$
Moreover $R\subseteq S$ for every such $S$, so we can conclude that $xSu$.
Since $S\circ S \subseteq S$, we can conclude that $R^\dedClo \subseteq R^\rtrClo$. 
\end{proof}

\begin{digression}\label{dig:proofTech}
    Observe that the proof technique by induction on the height of formal derivations in a (simple) formal system is usually employed when proving the properties of $R^\dedClo $.
\end{digression}

\begin{lemma}
    $R^\dedClo \subseteq R^\nroClo$
\end{lemma}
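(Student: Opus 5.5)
We prove the inclusion $R^\dedClo \subseteq R^\nroClo$ directly, by induction on the height of the derivation of $xR^\dedClo y$, in the spirit of Digression~\ref{dig:proofTech}. Although it would follow from the previous lemmas ($R^\dedClo\subseteq R^\rtrClo\subseteq R^\nroClo$), a direct argument is instructive. The statement to be established is: for every derivation of $xR^\dedClo y$ there exists $n\in\omega$ such that $xR^n y$. We then conclude $xR^\nroClo y$ by the definition of $R^\nroClo$ as $\bigcup_{n\in\omega}R^n$, which amounts to an introduction step for the existential quantifier hidden in the union.

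We proceed by cases on the last rule applied in the derivation.
\begin{itemize}
\item If it is (id), then $y=x$, and $x\mathbb{1}x$, that is, $xR^0x$; we take $n=0$.
\item If it is (in), then $xRy$. Since $R^1=R\circ R^0=R\circ\mathbb{1}=R$, as already observed in the proof that $R^\rtrClo\subseteq R^\nroClo$, we take $n=1$.
\item If it is (tx), the derivation ends with premises $xRu$ and $uR^\dedClo y$ for some $u$. The derivation of $uR^\dedClo y$ is strictly shorter, so the induction hypothesis yields some $m$ with $uR^m y$. By the definition of composition, from $xRu$ and $uR^my$ we obtain $x\,(R\circ R^m)\,y$, that is, $xR^{m+1}y$; we take $n=m+1$.
\end{itemize}

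The only point requiring care is the (tx) case. There the choice of the recursive clause $R^{n+1}:=R\circ R^n$ in Definition~\ref{def:closures}.2 matches the shape of rule (tx) exactly, with the $R$-step on the left. The witness $u$ is supplied by the premise and serves as the intermediate element in the definition of composition, so no associativity lemma about powers of $R$ is needed. This is the main obstacle to watch for. Had the rule been formulated with the $R$-step on the right, one would first have to prove by induction that $R^n\circ R=R\circ R^n$. A secondary subtlety for students is the correct unpacking of the existential quantifiers: one must eliminate them from the hypothesis ($\exists m$) and reintroduce them in the conclusion ($\exists n$), which is a good exercise in natural deduction with quantifiers.
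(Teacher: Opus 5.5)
Your proof is correct and takes the same approach as the paper: induction on the height of the derivation of $xR^\dedClo y$, with witnesses $n=0$ for (id), $n=1$ for (in), and $n=m+1$ for (tx), since $R^{m+1}=R\circ R^m$ has the same shape as rule (tx). Your (tx) case also keeps the variables straight ($xRu$, $uR^m y$, hence $xR^{m+1}y$), which the paper's terse version garbles slightly.
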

\begin{proof}
The proof is done by induction on the height of the proof $xR^\dedClo y$, and
we show that $xR^ny$ for some $n\in\omega$.
When the derivation is concluded by the first rule, then $x=y$ and $xR^0y$. Otherwise, if $xR^\dedClo y$, because $xRy$, then we have $xR^1y$.
Finally, if $xR^\dedClo y$ follows by the third rule,
by induction hypothesis we have that $yR^nz$ for some $n$, therefore $xR^{n+1}z$.
\end{proof}

Now we know that the first three characterizations of transitive closure of $R$ are equivalent,
and we still need  to show that they also coincide with that of $R^\codClo$.
Consider $R^*$ defined as $R^\nroClo$ above, and for $x\in X$ 
let $\texttt{desc}(x)=\{y\mid x  R^\nroClo y\}$.\bigskip

\begin{lemma}
    $R^\codClo = R^\nroClo$
\end{lemma}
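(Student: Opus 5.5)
We prove the two inclusions separately, using the set $\texttt{desc}(x)=\{y\mid x R^\nroClo y\}$ just introduced as the key witness for one direction.

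For $R^\codClo \subseteq R^\nroClo$, assume $x R^\codClo y$. By definition, $y$ lies in every $A\subseteq X$ with $R[A]\subseteq A$ and $x\in A$. We instantiate this universal statement with $A=\texttt{desc}(x)$, so it suffices to check two things.
\begin{itemize}
\item $x\in\texttt{desc}(x)$: this holds because $xR^0x$, as $R^0=\mathbb{1}$.
\item $\texttt{desc}(x)$ is hereditary: if $b\in R[\texttt{desc}(x)]$, there is $a$ with $xR^n a$ for some $n$ and $aRb$. We then need $xR^{n+1}b$.
\end{itemize}
The second point is the only mildly delicate step. Definition~\ref{def:closures}.2 sets $R^{n+1}=R\circ R^n$, which appends the new $R$-step on the \emph{left}, whereas here the step is appended on the right. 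We handle this with an auxiliary induction on $n$ proving $R^n\circ R = R\circ R^n$, or more simply $R^n\circ R^m=R^{n+m}$. This is the same additivity already used tacitly in the proof of $R^\rtrClo\subseteq R^\nroClo$. Alternatively, we can exploit the equivalence $R^\nroClo=R^\dedClo=R^\rtrClo$ already established: $R^\rtrClo$ is closed under composition and contains $R$, so $xR^\nroClo a$ and $aRb$ give $xR^\nroClo b$. We choose this second route, since it uses only lemmas already proved. Hence $y\in\texttt{desc}(x)$, that is, $xR^\nroClo y$.

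For $R^\nroClo\subseteq R^\codClo$, we show by induction on $n$ that $xR^ny$ implies $xR^\codClo y$. Fix $x$, and prove the statement for all $y$ simultaneously. In fact it is convenient to fix an arbitrary hereditary $A$ containing $x$ at the outset, and prove by induction on $n$ that $R^n[\{x\}]\subseteq A$.
\begin{itemize}
\item Base case: $R^0[\{x\}]=\{x\}\subseteq A$.
\item Step: here the left-appending form $R^{n+1}=R\circ R^n$ is again awkward, since $xR^{n+1}y$ means $xRu$ and $uR^ny$ for some $u$, and the hypothesis is about $x$, not $u$. To avoid this, we strengthen the induction to quantify over all points: for every $n$ and every hereditary $A$, if $u\in A$ and $uR^ny$ then $y\in A$. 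Then $xRu$ with $x\in A$ gives $u\in R[A]\subseteq A$, and the induction hypothesis applied to $u$ gives $y\in A$.
\end{itemize}
Since $A$ was arbitrary, $xR^\codClo y$.

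The main obstacle is thus the mismatch between the direction in which $R^n$ is built and the direction in which heredity propagates. In the second inclusion it is resolved by generalizing the induction hypothesis over the starting point. In the first it is resolved by invoking the transitivity of $R^\nroClo$ obtained from the previous lemmas.
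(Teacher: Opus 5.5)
Your proof is correct and is essentially the paper's. You use $\texttt{desc}(x)$ as the hereditary witness for $R^\codClo\subseteq R^\nroClo$, prove the converse by induction on $n$ through $xRu$, $uR^ny$, and usefully make explicit that the induction hypothesis must range over the starting point, which the paper uses only implicitly when it applies the hypothesis to $u$. Your detour through $R^\rtrClo$ for the heredity of $\texttt{desc}(x)$, which the paper calls immediate, is fine; but if your aim is to avoid the tacit additivity $R^n\circ R^m=R^{n+m}$, obtain $R^\rtrClo\subseteq R^\nroClo$ via $R^\rtrClo\subseteq R^\dedClo\subseteq R^\nroClo$ rather than from the first lemma, whose proof relies on exactly that additivity.
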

\begin{proof}
Immediately $R[\texttt{desc}(x)]\subseteq \texttt{desc}(x)$ and $x\in\texttt{desc}(x)$,
thus $R^\codClo\subseteq R^\nroClo$.
Conversely, take any $A\subseteq X$ such that $x\in A$ and $R[A]\subseteq A$,
and show by induction on $n$ that $xR^ny$ implies $y\in A$.
The basis is obvious because $x\in A$ by assumption.
By assuming that $xR uR^n y$ we have $x\in A$, $u\in R[A]$ and, in turn, $u\in A$. We have $y\in A$
by the assumption on $A$ and the induction hypothesis. Thus, we conclude that the ancestral of $R$ 
and the transitive closure of $R$ are the same.
\end{proof}

This concludes the proof of Theorem~\ref{th:main}, that gives four equivalent characterizations of (reflexive) transitive closure.

\section{Quantales}

Now we take a slightly more abstract view of the characterization of $R^*$. 
Until now we have been working in the structure 
 $$\powerset(X\times X)$$
 of binary relations over $X$ with the composition of relations as a monoidal operation and arbitrary union and intersection as lattice operations. This structure is an instance of a \emph{quantale} \cite{rosenthal1990lst}.

\begin{definition}[Unital Quantale]
A \emph{quantale} is, basically, a complete\footnote{A complete lattice is a partially ordered set $(L, \leq)$ such that every subset $A$ of $L$ has both a greatest lower bound (aka infimum or meet) $\bigwedge A$ and a least upper bound (aka supremum or join) $\bigvee A$ in $L$. In the special case where A is the empty set, the meet of A is the greatest element of L. Likewise, the join of the empty set is the least element of L. }
lattice $(L,\leq)$ endowed with an internal operation
$\cdot : L \times L \to L$ which is associative and satisfies, for all $X,Y \subseteq L$:
$$
x \cdot (\bigvee Y) = \bigvee \{x \cdot y \mid y \in Y \}
\qquad\mbox{ and }\qquad
(\bigvee X) \cdot x = \bigvee \{y \cdot x \mid y \in X \}.
$$
The quantale is \emph{unital} if it has an identity element 1 for its multiplication:
$$x \cdot 1 = x = 1 \cdot x$$
for all $x \in L$. (In this case, the quantale is naturally a monoid.) 
\end{definition}
Definition~\ref{def:closures}.1 of $R^*$ can be easily written in any unital quantale 
, which is the appropriate structure for generalizing the definition of (reflexive) transitive closure and which deserves a special mention because it plays a unifying role in the topics of our (not so) hypothetical course on foundations.

\medskip\noindent  
Let $R$ be a relation on $X$. c
In a unital quantale we can write $f(S):= 1 \vee (R \cdot S)$ and define:
$$
R^*=\bigwedge_{f(S) \leq S}   S
$$
and we can prove that whenever $f(S)\leq S$ we have $R^*\leq S$;
furthermore, 
\begin{equation}
\label{fix}
f(R^*) = R^*\,,
\end{equation}
 so $R^*$ is the least fixed-point of $f$. 

 \begin{digression}
 The construction of $R^*$ is a special case of the construction of the least fixed point of a monotonic function of a complete lattice into itself, which is (part of) the content of a celebrated theorem by Tarski \cite{tarski:1955}. The proof of the theorem involves only the most basic facts about complete lattices and can therefore be appreciated by the public for which our lectures are designed, so it is a candidate for an extended exercise.
 Another possible -- although slightly more advanced -- topic for meditation starts with the observation that the above definition characterizes $R^*$ as the initial $f$-algebra in a (degenerate) category, and the fixed point property follows then from Lambek's Lemma \cite{AdamekMiliusMoss2025}. Unfolding the categorical terminology, given a category $\mathcal{C}$ and an endofunctor of $F$,
 an \emph{$F$-algebra} is an object $X$ with a morphism $a : FX \to X$ in $\mathcal{C}$. 
 Such an algebra is \emph{initial} if,
 for any other $F$-algebra $b: FY \to Y$ there is exactly one morphism $f: X \to Y$ in $\mathcal{C}$
 such that $b \circ Ff = f \circ a$.
 Lambek's Lemma states that if $X,a$ is an initial $F$-algebra, then $a$ is an isomorphism.
 If we specialize to partial orders, an $F$-algebra is an inequality $FX \leq X$; it is initial, and then $F X = X$, if $X \leq Y$ for any $F$-algebra $Y$. In particular, $R^*$ is the initial $F$-algebra for $F(S):= \mathbb{1} \cup (R \circ S)$.
 \end{digression}
 
 From our present point of view, an important observation is that, by generalizing to quantales, we can introduce the second star of our story by suitably changing the underlying quantale. Consider the quantale of formal languages over an alphabet $A$,
 defined as $$\powerset (A^*)$$
 where $A^*$ is defined as the union $\bigcup_{n\in\omega} A^n$ and where 
 $$ A^0 =\{\varepsilon \}\,,\qquad A^{n+1}=\{a w \mid w\in A^n \ \text{ and }\  a\in A\}. $$     
 Then the monoidal structure is given by 
 $$ LM:= \{ uv \mid u\in L, v\in M \}$$
 for  $L,M\subseteq A^*$.
 
\begin{digression}
    Of course $A^*$ is isomorphic to the set of lists over $A$,
    the initial algebra of the set endofunctor $X\mapsto 1+(A\times X)$, where the empty string $\varepsilon$ corresponds to $\langle \rangle$, and a string of the form $au$ corresponds to the pair $\langle a,u\rangle$.
\end{digression}

The star operation of language theory is then defined by the fixed point equation:
$$ L^* = 1 \cup LL^*\;$$
that provides the value of Equation~(\ref{fix}) in the quantale of languages $\powerset (A^*)$, where $1 = \{\varepsilon \}$.

\begin{digression}
\begin{itemize}
\item The structures introduced above should have made the students familiar with the idea that algebra does not necessarily have to do with numbers. Concerning the algebraic structure of languages, a natural question is whether the \emph{regular} languages are subject to algebraic laws. Various types of structures have been proposed, starting from \cite{conway1971regular}: among these, Kleene algebras are widely studied, but the techniques employed go beyond what may be appreciated by first-year students.

\item One interesting point that might be pursued further is the categorical theme.
     After observing that a quantale is a special case of a complete and cocomplete monoidal category,
     one can define monoid objects in such categories.
     Then, adapting the definition of free object to this impoverished context, the free monoid object over $R$ in the quantale of relations
     turns out to be the reflexive transitive closure of $R$
     whereas the same construction in the quantale of languages gives Kleene's star. In fact, $a^*$ in these quantales satisfies $1 \vee a \vee (a^* a^*) \leq a^* $, which entails the three inequations that state that $a^*$ is a monoid object extending $a$, namely $1 \leq a^*$, $a \leq a^*$ and $a^*a^* \leq a^*$.
     In both cases, freeness of $a^*$ becomes the property that, if $1 \vee a \vee bb \leq b$, then $a^* \leq b$, which follows from  \cite[Theorem 1]{pratt:1991}.
\end{itemize}
\end{digression}

 \section{Algorithmics}
 The logical and algebraic slant that we have given to our story allows to establish a direct bridge with the algorithmic aspects of (reflexive) transitive closure, which is clearly not an alien topic to the training of a computer scientist. Our treatment is, admittedly, very sketchy, the unique excuse for this being the fact that the details are already present in many textbooks in algorithmics, for example \cite{AHU:1974}. 

Consider the boolean algebra $$\mathbf{B} = (\{ T,F \}, \vee, \wedge, \overline{\phantom{xx}}),$$ 
then consider a finite  set $X$ with a fixed enumeration of its elements, $X = \{x_1,\ldots,x_n \}$ and the structure of $n \times n$ matrices over $\mathbf{B}$, which corresponds bijectively with the set of binary relations over $X$. The basic idea to be explained is that relational composition coincides with matrix product, as can easily be seen by considering the standard logical formula defining relational product
\[
(R \cdot S)(x,y) \eqdef \exists u \in X (R(x,u) \& S(u,y))
\]
and interpreting the right-hand side, in the style of algebraic logic, as the usual formula for matrix product
\[
\bigvee_{u \in X} \quad R_{xu} \wedge S_{uy}
\]
observing that $\bigvee_{u \in X}$ is a finite disjunction.
Then we can adapt the notion of reflexive transitive closure  given in Definition~\ref{def:tc}(\ref{def:tc-power}) 
\[
\bigvee_{0 \leq i \leq \# X}  R^i
\]
and conclude that (a not very efficient version of) the Warshall algorithm 
for the transitive closure of Boolean matrices arises directly from translating the latter formula into three nested for-loops \cite{AHU:1974,carre:1979}.

\begin{digression}
From the point of view of the formation of a computer scientist, the above interpretation is a simple exercise in programming. From the point of view of foundations, the above remarks are an introduction to a style of algorithm design that helps factoring out an abstract algorithm that can be specialized to several well-known algorithms (e.\ g., shortest path). While this area of algorithmics has become a chapter of standard textbooks, we believe that the methodological basis of such an interpretation should be studied as providing an independent insight into the use of algebraic and logical structures in computer science. There are several ways this can be done, and we just point out some references that may help finding an orientation in the wide range of algebraic structures that may serve this purpose. One basic notion is that of a \emph{path algebra} \cite[Ch.\ 3]{carre:1979}, of which the Boolean algebra of truth values $\mathbf{B}$ is an example. Taking $n \times n$ Boolean matrices $X,Y$ with values in $\mathbf{B}$ one can define \cite[\S 3.2.5]{carre:1979} sum and product operations 
\[
(X + Y)_{ij} := X_{ij} \vee Y_{ij}
\]
\[
(X \cdot Y)_{ij} := \bigvee_{k = 1}^n (X_{ik} \wedge Y_{kj})
\]
which give the set of $n \times n$ matrices, again, the structure of path algebra. Often a stronger \emph{closed semiring} structure \cite[\S 5.6]{AHU:1974} is adopted. Quantales are also structures that allow the same kind of algebraic treatment. Going beyond the specific problem of calculating the reflexive transitive closure of a Boolean matrix, we can include a still wider range of algebraic structures, see the discussion in \cite[\S 2]{pratt:1991}.
\end{digression}

\section{Further topics}\label{further}
\newcommand{\finsubset}{\subseteq_{\mathrm{fin}}}
\newcommand{\corners}[1]{\langle #1 \rangle}
\newcommand{\set}[1]{\{ #1 \}}

We have presented a chapter of what might look very much like a traditional approach, but it is not.

While the topics that we have discussed might be part of a ``vertical'' approach that consists of a list of subsections of a syllabus of a course in foundations, our emphasis is actually on a complementary ``horizontal'' approach, trying to isolate the technical fundamentals (as opposed to foundations) that students should be able to learn from the hands-on work suggested by our sketch in the previous sections. Admittedly, our presentation is biased towards algebraic structure: we have tried to provide examples that might harmonize with the course in discrete mathematics that is assumed to be carried out in parallel with that on foundations. 
A set of examples that we would like to include in our lectures could be taken from relational algebra, abstract rewriting systems, and possibly also from the most abstract parts of classical recursion theory, for example Owing's treatment of diagonalization \cite[II.2, p.\ 154]{odifreddi:1989}. 

Finally, we suggest a theme that connects well to the general setting of our presentation.

\paragraph{Coinduction}
A natural complement to the development of transitive closure is an introduction to coinduction. Beside its central role in proving properties of structures that may unfold to infinity, like streams and labelled transitions systems, it offers a dual approach to the one we have followed. While, in defining transitive closure, we have been looking for the \emph{smallest} objects satisfying certain properties, in coinduction one looks for the \emph{largest}. In a set-theoretic presentation, coinduction (and the fundamental notion of bisimulation) require no more than the complete lattice of binary relations under inclusion, and this makes the subject a viable topic for the same kind of treatment that we have given to transitive closure. 

 \section*{Some literature}

\begin{itemize}
    \item 
Ancestral relations appear for the first time in~\cite{frege1879begriffsschrift}.
They are also discussed in Section E of \cite{russell1910pm}, which is about the inductive relations. See also \S39 in~\cite{quine1940ml}.

\item Some historical information on transitive closure, with a motivation from fundations, is contained in~\cite{pratt1992lics}. Vaughan Pratt also reformulated abstractly the Floyd-Warshall algorithm for transitive closure of relations (= Boolean matrices) in \cite{Pr89a}. The history of the Floyd-Warshall algorithm is in itself interesting: after the original publication in \cite {warshall:1962} a correctness proof has appeared in the classic \cite{ahoUllman1972}. This proof has been discussed and improved in \cite{wegner:1974} and further studied in \cite{ADJ:1976}.

\item The literature on Kleene algebras and related structures is extensive: the main reference is \cite{conway1971regular}. The name is due to Dexter Kozen: his page (\url{https://www.cs.cornell.edu/~kozen/}) contains links to lectures and many of Kozen's papers devoted to this topic. Although not directly focussed on Kleene algebra, \cite{pratt:1991} is a very useful introduction to the whole area.

\item The origins of coinduction have been studied in a nice paper by Sangiorgi \cite{sangiorgi:2009}. Coinduction (and its dual) are the topics of \cite{JacobsRutten2011}, The whole subject is covered in the two volumes \cite{Sangiorgi2011,sangiorgiRutten2011}. Applications of coalgebraic notions which may even be appreciated by freshmen are the subject of \cite{rutten:2003} and of several other papers by the same author.
\end{itemize}

\bibliographystyle{eptcs}
\bibliography{biblio}

\end{document}